\documentclass[11pt,a4paper]{article}
\usepackage[utf8]{inputenc}
\usepackage[T1]{fontenc}
\usepackage[margin=25mm]{geometry}
\usepackage{amsmath,amssymb,bm}
\usepackage{graphicx}
\usepackage{authblk}
\usepackage[numbers,sort&compress]{natbib}
\newcommand{\onlinecite}[1]{\citenum{#1}}
\usepackage[hidelinks]{hyperref}
\newtheorem{proposition}{Proposition}
\newtheorem{corollary}{Corollary}
\newcommand{\qedmark}{\unskip\nobreak\hfill\ensuremath{\blacksquare}}
\newcommand{\R}{\mathbb{R}}
\newcommand{\norm}[1]{\left\lVert#1\right\rVert}
\hypersetup{pdftitle={Extrapolating Learned Dynamics across Bifurcations Based on Generalized Synchronization},pdfauthor={Lingfeng Chen and Masanobu Inubushi}}

\title{Extrapolating learned dynamics across bifurcations\texorpdfstring{\\}{ }based on generalized synchronization}
\author[1]{Lingfeng Chen}
\author[1,2]{Masanobu Inubushi\thanks{Corresponding author: \href{mailto:inubushi@rs.tus.ac.jp}{inubushi@rs.tus.ac.jp}}}
\affil[1]{Department of Applied Mathematics, Tokyo University of Science, 1-3, Kagurazaka, Shinjuku, Tokyo 162-8601, Japan}
\affil[2]{Graduate School of Engineering Science, The University of Osaka, 1-3 Machikaneyama, Toyonaka, Osaka 560-8531, Japan}
\date{}
\begin{document}
\maketitle
\begin{abstract}
Predicting dynamics across bifurcations by extrapolating trained models is crucial in diverse fields of physics. Generalized synchronization provides a framework for understanding dynamical reconstruction by neural networks, especially in the context of reservoir computing. Building on this framework, we derive a formula combining pretrained readouts to reconstruct unseen chaotic dynamics. Using classical Lagrange interpolation theory, we derive rigorous bounds on readout errors and conjugacy residuals. Numerical examples demonstrate accurate prediction of chaotic dynamics across bifurcations.
\end{abstract}

\section{Introduction}
\label{sec:intro}
Scientific machine learning increasingly complements equation-based models in diverse fields of physics.
In this context, a central challenge in nonlinear science is to predict dynamics across bifurcations -- parameter-dependent transitions between steady, oscillatory, and chaotic behavior.
For example, extrapolating learned models of turbulent dynamics to unseen flow conditions is critical because training data are typically available only for a very limited range of conditions~\cite{Inubushi2020}.
The aim of the present paper is to establish a theoretical foundation for {this extrapolation capability.}

As a concrete model of machine learning suitable for nonlinear dynamics, we focus on reservoir computing (RC).
In this framework, only the readout weights of a driven recurrent network are trained while the internal weights {remain fixed.}
{Feeding the network output back as input produces an autonomous predictor}~\cite{Pathak2017,Pathak2018,Lu2018}, reconstructing ergodic properties, including Lyapunov spectra~\cite{Kobayashi2021}.
In pioneering work on extrapolation, Kim \emph{et al.}~\cite{Kim2021} demonstrated numerically that RC with additional control inputs can predict dynamics across bifurcations. 
However, the mechanism by which training on pre-bifurcation dynamics enables accurate reconstruction of unobserved bifurcation structure is not yet understood.

Generalized synchronization (GS) relates the target state to the driven reservoir state.
Based on GS, mathematical foundations for synchronization, embeddings, and approximation were established~\cite{Hart2020,Grigoryeva2021}.
In Hara and Kokubu's invariant-section formulation, 
the inverse of an injective GS map defines an ideal readout, yielding autonomous reservoir dynamics conjugate to the target dynamics~\cite{Hara2024,Hara2026}.
This rapidly developing GS-based framework has also led to new applications, including the theoretical justification of nonlinear readouts~\cite{Ohkubo2024}.

Here, we apply the GS-based framework to the prediction of dynamics across bifurcations. Specifically, studying the parameter dependence of ideal readouts, we derive a simple method that linearly combines readouts pretrained independently for the same fixed reservoir.
Figure~\ref{fig:concept} illustrates this method using the Lorenz system as an example.
Combining this framework with classical Lagrange interpolation theory, 
we derive rigorous bounds on readout errors and the associated conjugacy residuals. Using examples ranging from saddle-node and pitchfork normal forms to the Lorenz system, we demonstrate that the proposed method accurately predicts dynamics across bifurcations.
Finally, we discuss a possible link between this extrapolation capability and the identity theorem for analytic functions.

\begin{figure}[bhtp]
 \centering
 \includegraphics[width=\columnwidth]{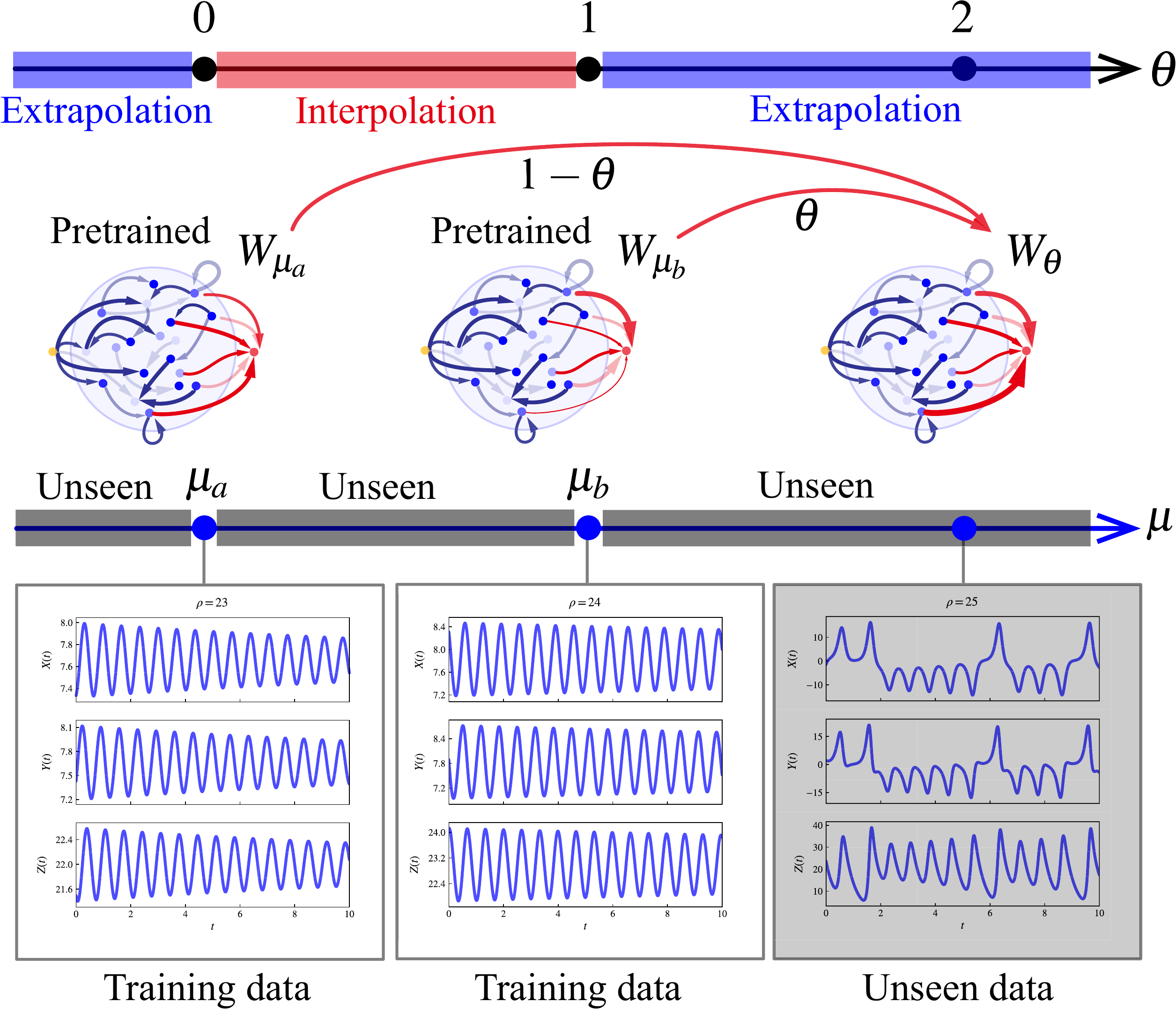}
 \caption{{Concept of readout interpolation and extrapolation with a linear combination of pretrained readout weights
 $W_\theta=(1-\theta)W_{\mu_a}+\theta W_{\mu_b}$.
 The same fixed reservoir is used throughout; only the readout weights change. Weights $W_{\mu_a}$ and $W_{\mu_b}$ are trained at $\mu_a$ and $\mu_b$. The mixing coefficient $\theta$ gives interpolation for $0\leq\theta\leq1$ and extrapolation otherwise. The lower axis uses the Lorenz parameter $\mu=\rho$, with $\mu_a=23$, $\mu_b=24$, and comparison value $\tilde\mu(2)=25$ under the parameter-correspondence ansatz in Eq.~\eqref{eq:nominal}.}}
 \label{fig:concept}
\end{figure}

\section{Synchronization and ideal reconstruction}
\label{sec:method}
{Let $J\subset\R$ be an open interval of the bifurcation parameter.} Let $\mathcal X:=\R^{d_x}$ denote the state space of the dynamical systems to be learned, referred to as target dynamical systems, and {let $\mathcal R \subset \R^{d_r}$ denote the state space of the recurrent neural network, referred to as reservoir space.} Write $\norm{\cdot}_{\mathcal X}$ and $\norm{\cdot}_{\mathcal R}$ for their Euclidean norms, and $\norm{\cdot}_{\mathrm{op}}$ for the induced operator norm. For each $\mu\in J$, let $M_\mu\subset\mathcal X$ be a nonempty compact set and $f_\mu:M_\mu\to M_\mu$ a continuous map. A continuous reservoir map $F:\mathcal X\times\mathcal R\to\mathcal R$, independent of $\mu$, defines
\begin{equation}
\begin{aligned}
 x_{n+1}&=f_\mu(x_n),\\
 r_{n+1}&=F(x_n,r_n).
\end{aligned}
\label{eq:driven}
\end{equation}
For example, $F(x,r)=\tanh(Ar+Bx+d)$, with componentwise activation and fixed $A,B,d$.
$A$ is the recurrent weight matrix, $B$
is the input weight matrix,
and $d$ is the input bias vector.
In a standard echo state network (ESN), $A$, $B$, and $d$ are randomly chosen and kept fixed, while only the readout weights are trained.

Assume that, for every $\mu\in J$, the same reservoir $F$ admits a continuous injective invariant section (GS map) $s_\mu:M_\mu\to\mathcal R$, satisfying
\begin{equation}
 s_\mu(f_\mu(x))=F(x,s_\mu(x)),\qquad x\in M_\mu.
\label{eq:gs}
\end{equation}
Define the graph
\begin{equation}
 \mathcal G[s_\mu]:=\{(x,r)\in M_\mu\times\mathcal R:r=s_\mu(x)\}.
\end{equation}
Equation~\eqref{eq:gs} expresses its forward invariance:
\begin{equation}
 (x_n,r_n)\in\mathcal G[s_\mu]
 \quad\Longrightarrow\quad
 (x_{n+1},r_{n+1})\in\mathcal G[s_\mu].
\end{equation}
For the ideal driven dynamics, we consider initial states on $\mathcal G[s_\mu]$. 
Theorem~2.4 of Ref.~\onlinecite{Hara2026} gives sufficient conditions for differentiable invariant sections.
{Since $M_\mu$ is compact, the continuous injection $s_\mu$ is a homeomorphism onto its compact image $s_\mu(M_\mu)$.}
We call the inverse GS map the \emph{ideal readout},%
\begin{equation}
 \omega_\mu=s_\mu^{-1}:s_\mu(M_\mu)\longrightarrow M_\mu.
\label{eq:inverse}
\end{equation}
The ideal autonomous reservoir is obtained by closing the loop with the ideal readout $\omega_\mu=s_\mu^{-1}$:
\begin{equation}
 G_\mu(r)=F(\omega_\mu(r),r),\qquad r\in s_\mu(M_\mu).
\label{eq:ideal}
\end{equation}
Since $\omega_\mu(s_\mu(x))=x$, Eq.~\eqref{eq:gs} gives
\begin{equation}
 G_\mu\circ s_\mu=s_\mu\circ f_\mu.
\label{eq:conjugacy}
\end{equation}
Thus $G_\mu|_{s_\mu(M_\mu)}$ is conjugate to $f_\mu$, by the same identity used in Theorem~2.24(2) of Ref.~\onlinecite{Hara2026}.

\section{Derivation of the linear combination formula}
\label{sec:continuation}
We first consider the family of ideal readouts (Step 1), then interpolate them (Step 2), and finally substitute trained readouts (Step 3).

\subsection{\underline{Step 1}: Ideal readouts on a common reservoir region}
\label{sec:stage1}
Let $V\subset\mathcal R$ be a compact set containing $\bigcup_{\mu\in J}s_\mu(M_\mu)$. Assume continuous extensions to $V$, denoted by the same symbols $\omega_\mu$:%
\begin{equation}
 \omega_\mu\in\mathcal B:=C^0(V;\mathcal X),\qquad
 \omega_\mu|_{s_\mu(M_\mu)}=s_\mu^{-1}.
\label{eq:extension}
\end{equation}
Here $C^0(V;\mathcal X)$ denotes the continuous $\mathcal X$-valued functions on $V$. Use the uniform norm
\begin{equation}
 \norm{u}_{\mathcal B}=\sup_{r\in V}\norm{u(r)}_{\mathcal X}.
\label{eq:norm}
\end{equation}
Since $V$ is compact and $u$ is continuous, the supremum is attained and can also be written as a maximum.

\subsection{\underline{Step 2}: Interpolation of the ideal readouts}
\label{sec:stage2}
For two parameters $\mu_a<\mu_b$ in $J$, fix a bounded closed interval $\Theta\supseteq[0,1]$ and choose the degree-one interpolant
\begin{equation}
 \omega^{\mathrm{lin}}_\theta
   =(1-\theta)\omega_{\mu_a}+\theta\omega_{\mu_b},
 \qquad \theta\in\Theta.
\label{eq:ideal_interpolation}
\end{equation}
Equation~\eqref{eq:ideal_interpolation} gives interpolation for $0\leq\theta\leq1$ and extrapolation otherwise.

To specify a comparison target, we adopt the following affine ansatz for the parameter correspondence:
\begin{equation}
 h:=\mu_b-\mu_a,\qquad
 \tilde{\mu}(\theta)=\mu_a+\theta h.
\label{eq:nominal}
\end{equation}
Choose $\Theta$ so that $I_\Theta:=\tilde{\mu}(\Theta)\subseteq J$.
This ansatz specifies the comparison between target and reservoir dynamics. We test it numerically for the Lorenz system in {Fig.~\ref{fig:lorenz_results}(d)}.

{Proposition~\ref{prop:idealerror} gives an error estimate for the interpolant using classical degree-one Lagrange interpolation theory. See Supplemental Appendix E for the proof.}

\begin{proposition}[Interpolation and extrapolation error]
\label{prop:idealerror}
Assume that the map $\mu\mapsto\omega_\mu\in\mathcal B$ is $C^2$ on $J$. Define
\begin{equation}
 C:=\sup_{\nu\in I_\Theta}
        \norm{\frac{d^2\omega_\nu}{d\nu^2}}_{\mathcal B}<\infty.
\label{eq:curvature}
\end{equation}
Then, for every $\theta\in\Theta$,
\begin{equation}
 \norm{\omega^{\mathrm{lin}}_\theta-\omega_{\tilde{\mu}(\theta)}}_{\mathcal B}
 \leq \frac{C}{2}h^2|\theta(\theta-1)|.
\label{eq:idealbound}
\end{equation}
\end{proposition}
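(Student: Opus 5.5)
The plan is to reduce the statement to classical degree-one Lagrange interpolation in the scalar parameter $\theta$, carried out directly in the Banach space $\mathcal B$. Define $g(\theta):=\omega_{\tilde\mu(\theta)}\in\mathcal B$ for $\theta\in\Theta$. By the assumption that $\mu\mapsto\omega_\mu$ is $C^2$ on $J$ and since $\tilde\mu$ is affine with $\tilde\mu(\Theta)=I_\Theta\subseteq J$, the map $g$ is $C^2$ on $\Theta$. The chain rule gives $g''(\theta)=h^2\,\frac{d^2\omega_\nu}{d\nu^2}\big|_{\nu=\tilde\mu(\theta)}$, so $\sup_{\Theta}\norm{g''}_{\mathcal B}\le h^2C$. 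Since $g(0)=\omega_{\mu_a}$ and $g(1)=\omega_{\mu_b}$, the interpolant is $\omega^{\mathrm{lin}}_\theta=(1-\theta)g(0)+\theta g(1)$. The claim is therefore exactly the error estimate $\norm{(1-\theta)g(0)+\theta g(1)-g(\theta)}_{\mathcal B}\le \tfrac12\sup\norm{g''}\,|\theta(\theta-1)|$, valid for $\theta$ both inside and outside $[0,1]$.

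The pointwise mean-value form of the Lagrange remainder, $g(\theta)-p(\theta)=\tfrac12 g''(\xi)\theta(\theta-1)$, fails for vector-valued functions. I would therefore avoid it and use Taylor's formula with integral remainder, which is valid for $C^2$ maps into a Banach space via the Riemann integral of continuous $\mathcal B$-valued functions. Expanding about $\theta$ gives
\begin{equation*}
g(t)=g(\theta)+(t-\theta)g'(\theta)+\int_\theta^{t}(t-s)g''(s)\,ds,\qquad t\in\{0,1\}.
\end{equation*}
This is legitimate because $\Theta\supseteq[0,1]$ is an interval containing $\theta$, $0$ and $1$. Forming $(1-\theta)g(0)+\theta g(1)$, the zeroth-order terms combine to $g(\theta)$ and the first-order terms cancel, since $(1-\theta)(-\theta)+\theta(1-\theta)=0$. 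The error is therefore
\begin{equation*}
E(\theta)=(1-\theta)\int_\theta^0(-s)g''(s)\,ds+\theta\int_\theta^1(1-s)g''(s)\,ds.
\end{equation*}

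Bounding the norms requires the most care, because the sign of $\theta$, $1-\theta$ and the orientation of the integrals change between $\theta<0$, $0\le\theta\le1$ and $\theta>1$. For $0\le\theta\le1$ both weights are nonnegative, and $\int_0^\theta s\,ds=\theta^2/2$ and $\int_\theta^1(1-s)\,ds=(1-\theta)^2/2$. The total bound is $\tfrac{M}{2}\bigl[(1-\theta)\theta^2+\theta(1-\theta)^2\bigr]=\tfrac M2\theta(1-\theta)$ with $M=h^2C$. For $\theta>1$ the naive triangle inequality over-counts, since the two integrals overlap on $[0,1]$.

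The cleaner route, which I would use instead to handle all cases uniformly, writes the error with the Peano kernel of the second divided difference. The representation is
\begin{equation*}
E(\theta)=g(\theta)-\omega^{\mathrm{lin}}_\theta=\int_{\Theta}K_\theta(s)g''(s)\,ds,
\end{equation*}
where $K_\theta$ is a scalar piecewise-linear kernel obtained by applying the functional $u\mapsto u(\theta)-(1-\theta)u(0)-\theta u(1)$ to $(\cdot-s)_+$. By the scalar theory, $K_\theta$ does not change sign on its support; this is exactly why the scalar remainder has mean-value form. Its integral is obtained by testing with $u(t)=t^2/2$, which gives $\tfrac12\theta(\theta-1)$. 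Then
\begin{equation*}
\norm{E(\theta)}_{\mathcal B}\le\int|K_\theta(s)|\,\norm{g''(s)}_{\mathcal B}\,ds\le M\Bigl|\int K_\theta\Bigr|=\tfrac{h^2C}{2}|\theta(\theta-1)|,
\end{equation*}
which is Eq.~\eqref{eq:idealbound}.

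Verifying the sign-definiteness of $K_\theta$ in the extrapolation regime is the main obstacle. My first attempt is a direct computation: for $\theta>1$, $K_\theta(s)=\theta(1-s)$ on $[0,1]$ and $\theta-s$ on $[1,\theta]$ (up to sign), both nonnegative. The case $\theta<0$ is symmetric.

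A quicker alternative is to fix $r\in V$ and a unit vector $e\in\mathcal X$ and apply the scalar Lagrange remainder to $\theta\mapsto\langle e,g(\theta)(r)\rangle$. The scalar $C^2$ regularity follows because evaluation at $r$ is a bounded linear map $\mathcal B\to\mathcal X$. This gives $|\langle e,E(\theta)(r)\rangle|\le\tfrac{h^2C}{2}|\theta(\theta-1)|$, and taking the supremum over $e$ and $r$ yields the claim.
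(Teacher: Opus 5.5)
Your proof is correct and is essentially the classical degree-one Lagrange-interpolation estimate that the paper invokes, carried over to $\mathcal B$-valued maps. The scalarization in your last paragraph is complete by itself: the bounded functional $u\mapsto\langle e,u(r)\rangle$ of norm at most one, the scalar remainder $\tfrac12\phi''(\xi)\theta(\theta-1)$ with $\xi$ in the hull of $\{0,1,\theta\}\subseteq\Theta$ and $|\phi''(\xi)|\le h^2C$, then the supremum over unit $e$ and $r\in V$.

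There is one slip in the Peano-kernel paragraph. For $\theta>1$ the kernel on $[0,1]$ is $(\theta-1)s$, not $\theta(1-s)$; with your formula, $\int K_\theta$ would be $\tfrac12(\theta^2-\theta+1)$ rather than $\tfrac12\theta(\theta-1)$. The correct kernel is still nonnegative there, so your sign-definiteness argument and the bound are unaffected.
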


\subsection{\underline{Step 3}: Readout approximation and the weight formula}
\label{sec:stage3}
{The ideal readouts are generally unavailable in practice. In the conventional RC framework, we approximate them at the two training parameters by trained linear readouts for the same reservoir dynamics $F$,}
\begin{equation}
 \omega_{\mu_i}(r)\approx\widehat\omega_{\mu_i}(r)
        :=W_{\mu_i} r,\qquad i=a,b.
\label{eq:readouts}
\end{equation}
{As in conventional RC~\cite{Pathak2017,Pathak2018,Kim2021,Hara2026}, each $W_{\mu_i}$ is obtained by linear least squares,}

\begin{equation}
 W_{\mu_i}=\arg\min_W\sum_{n\in\mathcal T_i}\norm{Wr_{n+1}-x_{n+1}}_{\mathcal X}^2.
 \label{eq:least_squares}
\end{equation}

{Here $\mathcal T_i$ is the set of sample indices used for training at $\mu_i$.} The subscript $\mu_i$ identifies a training parameter, whereas $\theta$ in $W_\theta$ denotes the mixing coefficient.

Substituting these trained readouts into Eq.~\eqref{eq:ideal_interpolation} gives the \emph{combined readout}
\begin{equation}
\begin{aligned}
 \widehat\omega_\theta(r)
 &:=(1-\theta)\widehat\omega_{\mu_a}(r)+\theta\widehat\omega_{\mu_b}(r)\\
  &=\bigl[(1-\theta)W_{\mu_a}+\theta W_{\mu_b}\bigr]r.
\end{aligned}
\label{eq:interpolation}
\end{equation}
{The weight formula and the autonomous update obtained by feeding the combined readout back as input are therefore}
\begin{equation}
\begin{aligned}
 W_\theta&=(1-\theta)W_{\mu_a}+\theta W_{\mu_b},\\
 r_{n+1}&=\widehat G_\theta(r_n):=F(W_\theta r_n,r_n).
\end{aligned}
\label{eq:algorithm}
\end{equation}
The linear-interpolation structure of Eq.~\eqref{eq:algorithm} resembles that arising in the control-input analysis of Tadokoro \emph{et al.}, although the derivation and assumptions differ~\cite{Tadokoro2024}.

Let $e_i=\widehat\omega_{\mu_i}-\omega_{\mu_i}$. {The total readout error separates into the interpolation error and the errors in the two pretrained readouts:}
\begin{equation}
\begin{aligned}
 \widehat\omega_\theta-\omega_{\tilde{\mu}(\theta)}
 &=\widehat\omega_\theta+\omega^{\mathrm{lin}}_\theta
       -\omega^{\mathrm{lin}}_\theta-\omega_{\tilde{\mu}(\theta)}\\
 &=\bigl(\omega^{\mathrm{lin}}_\theta-\omega_{\tilde{\mu}(\theta)}\bigr)
       +\bigl(\widehat\omega_\theta-\omega^{\mathrm{lin}}_\theta\bigr)\\
 &=\bigl(\omega^{\mathrm{lin}}_\theta-\omega_{\tilde{\mu}(\theta)}\bigr)
       +(1-\theta)e_a+\theta e_b.
\end{aligned}
\label{eq:error_decomposition}
\end{equation}
\begin{corollary}[Error after readout approximation]
\label{cor:implemented}
Under the assumptions of Proposition~\ref{prop:idealerror}, set $\varepsilon_i=\norm{e_i}_{\mathcal B}$. For every $\theta\in\Theta$,
\begin{equation}
\begin{aligned}
 \norm{\widehat\omega_\theta-\omega_{\tilde{\mu}(\theta)}}_{\mathcal B}
 \leq{}&|1-\theta|\varepsilon_a+|\theta|\varepsilon_b\\
 &+\frac{C}{2}h^2|\theta(\theta-1)|\\
 =:{}&E(\theta).
\end{aligned}
\label{eq:bound}
\end{equation}
\end{corollary}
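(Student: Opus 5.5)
The plan is to start from the error decomposition in Eq.~\eqref{eq:error_decomposition}, take the $\mathcal B$-norm of both sides, and bound the pieces separately. That identity is purely algebraic. It uses only the definitions $\widehat\omega_\theta=(1-\theta)\widehat\omega_{\mu_a}+\theta\widehat\omega_{\mu_b}$, $\omega^{\mathrm{lin}}_\theta=(1-\theta)\omega_{\mu_a}+\theta\omega_{\mu_b}$ and $e_i=\widehat\omega_{\mu_i}-\omega_{\mu_i}$. It therefore holds pointwise on $V$ for every $\theta\in\Theta$, including the extrapolation range $\theta\notin[0,1]$.

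First I will check that each term lies in $\mathcal B=C^0(V;\mathcal X)$, so that all norms are finite and the triangle inequality in $\mathcal B$ applies. The trained readouts $\widehat\omega_{\mu_i}(r)=W_{\mu_i}r$ are linear, hence continuous on the compact set $V$, and $\omega_{\mu_i}\in\mathcal B$ by Eq.~\eqref{eq:extension}. Consequently $e_i\in\mathcal B$ and $\varepsilon_i=\norm{e_i}_{\mathcal B}<\infty$.

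Next I will apply the triangle inequality and absolute homogeneity of the uniform norm in Eq.~\eqref{eq:norm}:
\begin{equation*}
\norm{\widehat\omega_\theta-\omega_{\tilde\mu(\theta)}}_{\mathcal B}
\leq \norm{\omega^{\mathrm{lin}}_\theta-\omega_{\tilde\mu(\theta)}}_{\mathcal B}
+|1-\theta|\,\varepsilon_a+|\theta|\,\varepsilon_b .
\end{equation*}
The absolute values on $1-\theta$ and $\theta$ matter here, because in the extrapolation regime one of the coefficients is negative.

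Finally, I will bound the first term by Proposition~\ref{prop:idealerror}, which holds for every $\theta\in\Theta$ under the stated $C^2$ assumption and gives $\frac{C}{2}h^2|\theta(\theta-1)|$. Adding the three contributions yields $E(\theta)$. No step is a real obstacle; the only point needing care is confirming that $e_i$ is measured on all of $V$, not just on $s_{\mu_i}(M_{\mu_i})$. This is legitimate because the extensions in Eq.~\eqref{eq:extension} are defined on $V$, so $\varepsilon_i$ is a well-defined sup over $V$.
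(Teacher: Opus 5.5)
Your proposal is correct and matches the paper's proof: take the $\mathcal B$-norm of the algebraic decomposition in Eq.~\eqref{eq:error_decomposition}, use the triangle inequality and absolute homogeneity (which gives the factors $|1-\theta|$ and $|\theta|$), and bound the interpolation term by Proposition~\ref{prop:idealerror}. Your extra check that $e_i\in\mathcal B$, so that $\varepsilon_i<\infty$, is a detail the paper leaves implicit.
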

\par\indent\emph{Proof---}\ignorespaces
Take norms in Eq.~\eqref{eq:error_decomposition} and apply Proposition~\ref{prop:idealerror}.\qedmark

For $i=a,b$, $\varepsilon_i$ measures the uniform approximation error over the entire common region $V$, not only on the synchronized set $s_{\mu_i}(M_{\mu_i})$ at the corresponding training parameter. A small approximation error on this set alone does not guarantee a small $\varepsilon_i$.

Conjugacy is a central concept for assessing dynamical reconstruction. Related conjugacy errors have been evaluated numerically~\cite{Hara2024,Ohkubo2024}. Here, we derive an upper bound on the conjugacy residual defined below.
\label{sec:conjugacy_residual}
\leavevmode\par\nopagebreak
\begin{corollary}[Conjugacy-residual bound]
\label{cor:residual}
{Under the assumptions of Corollary~\ref{cor:implemented}, put $\mu=\tilde\mu(\theta)$. Suppose $F$ is Lipschitz in its input between $x$ and $W_\theta s_\mu(x)$ at reservoir state $s_\mu(x)$, with one constant $L_x$ for all $x\in M_\mu$ and $\theta\in\Theta$. Define}
\begin{equation}
 \mathcal R_\theta(x):=\widehat G_\theta(s_\mu(x))-s_\mu(f_\mu(x)),
 \quad x\in M_\mu.
 \label{eq:conjugacy_residual}
\end{equation}
Then
\begin{equation}
 \sup_{x\in M_\mu}\norm{\mathcal R_\theta(x)}_{\mathcal R}\leq L_xE(\theta).
 \label{eq:conjugacy_bound}
\end{equation}
\end{corollary}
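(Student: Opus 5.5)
The plan is to rewrite the residual so that both of its terms are $F$ evaluated at the same reservoir state $s_\mu(x)$. Then the input-Lipschitz hypothesis reduces the residual to a readout error, and Corollary~\ref{cor:implemented} bounds that error.

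Fix $\theta\in\Theta$, put $\mu=\tilde\mu(\theta)\in I_\Theta\subseteq J$, and take $x\in M_\mu$. By the definition in Eq.~\eqref{eq:algorithm},
\begin{equation*}
 \widehat G_\theta(s_\mu(x))=F\bigl(W_\theta s_\mu(x),s_\mu(x)\bigr).
\end{equation*}
The invariant-section identity Eq.~\eqref{eq:gs} gives
\begin{equation*}
 s_\mu(f_\mu(x))=F\bigl(x,s_\mu(x)\bigr).
\end{equation*}
Hence the residual is
\begin{equation*}
 \mathcal R_\theta(x)=F\bigl(W_\theta s_\mu(x),s_\mu(x)\bigr)-F\bigl(x,s_\mu(x)\bigr).
\end{equation*}
This is a difference of $F$ at a common reservoir state with two different inputs. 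The Lipschitz hypothesis, taken with the uniform constant $L_x$, yields
\begin{equation*}
 \norm{\mathcal R_\theta(x)}_{\mathcal R}\le L_x\norm{W_\theta s_\mu(x)-x}_{\mathcal X}.
\end{equation*}

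Next I identify the two inputs with readouts. By Eq.~\eqref{eq:interpolation}, $W_\theta s_\mu(x)=\widehat\omega_\theta(s_\mu(x))$. By Eq.~\eqref{eq:extension}, the extended ideal readout restricts to $s_\mu^{-1}$ on $s_\mu(M_\mu)$, so $x=\omega_\mu(s_\mu(x))=\omega_{\tilde\mu(\theta)}(s_\mu(x))$. Therefore
\begin{equation*}
 \norm{W_\theta s_\mu(x)-x}_{\mathcal X}=\norm{(\widehat\omega_\theta-\omega_{\tilde\mu(\theta)})(s_\mu(x))}_{\mathcal X}.
\end{equation*}
The point $s_\mu(x)$ lies in $s_\mu(M_\mu)\subseteq V$. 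The pointwise value is therefore at most the uniform norm $\norm{\widehat\omega_\theta-\omega_{\tilde\mu(\theta)}}_{\mathcal B}$, which Corollary~\ref{cor:implemented} bounds by $E(\theta)$.

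Combining the two estimates gives $\norm{\mathcal R_\theta(x)}_{\mathcal R}\le L_xE(\theta)$ for every $x\in M_\mu$. Taking the supremum over $x$ gives Eq.~\eqref{eq:conjugacy_bound}.

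The main point of care is the containment $s_\mu(M_\mu)\subseteq V$ at the comparison parameter $\mu=\tilde\mu(\theta)$, which is generally neither training parameter. It holds because $V$ was chosen to contain the union over all of $J$ and $I_\Theta\subseteq J$. The second point is the correct reading of the Lipschitz hypothesis: it is only needed along the segment between $x$ and $W_\theta s_\mu(x)$ at the fixed state $s_\mu(x)$, with a constant uniform in $x$ and $\theta$. No Lipschitz property of $F$ in $r$ is required, because both evaluations share the same reservoir state.
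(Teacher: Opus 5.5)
Your proposal is correct and follows the same route as the paper's proof: Eq.~\eqref{eq:gs} writes both terms of $\mathcal R_\theta(x)$ as $F$ evaluated at the common state $s_\mu(x)$, the input-Lipschitz bound is applied, $x=\omega_{\tilde\mu(\theta)}(s_\mu(x))$ is identified via Eq.~\eqref{eq:extension}, and the pointwise readout error is bounded by the uniform norm on $V\supseteq s_\mu(M_\mu)$ and hence by $E(\theta)$ from Corollary~\ref{cor:implemented}. Your check that $s_\mu(M_\mu)\subseteq V$ at the non-training parameter $\mu=\tilde\mu(\theta)\in I_\Theta\subseteq J$ is well placed, though the Lipschitz hypothesis is used only as a two-point inequality at $x$ and $W_\theta s_\mu(x)$, not along the whole segment.
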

{The proof is given in Supplemental Appendix E.}

{The residual quantifies the defect in the conjugacy relation~\eqref{eq:conjugacy}.} For $F(x,r)=\tanh(Ar+Bx+d)$, it holds with $L_x=\norm{B}_{\mathrm{op}}$ because $|\tanh'|\leq1$. Supplementary Appendix C gives a fixed-point-position bound under additional local conditions.

\section{Numerical illustrations}
\label{sec:numerics}
{Here, we demonstrate that the proposed method enables extrapolation beyond the parameter values used for training to reconstruct fixed points and chaotic dynamics of the Lorenz system (Fig.~\ref{fig:lorenz_results}).}
{Results for scalar normal forms, including reconstructed fixed-point branches and stability changes, are presented in Appendix (Fig.~\ref{fig:fixed_points}).}
For comparison with earlier work, we consider the same target systems as Kim \emph{et al.}~\cite{Kim2021}.
The methodological differences between the two approaches are summarized in Appendix.

\subsection{Lorenz equations}
{We next examine the Lorenz system, {$\dot X=10(Y-X)$, $\dot Y=\rho X-Y-XZ$, and $\dot Z=XY-\tfrac83 Z$,} using the ESN and numerical settings in Supplemental Appendix A. 
Following Kim \emph{et al.}~\cite{Kim2021}, training uses four transient trajectories: at each of $\rho_a=23$ and $\rho_b=24$, one trajectory converges to each of the two nonzero stable fixed points {(Fig.~\ref{fig:concept})}. 
{We use only data at $\rho=23$ and $24$ for training the readout weights and for Bayesian optimization of the hyperparameters.}
The learned weights $W_{23}$ and $W_{24}$ give}
$
 W_\theta=(1-\theta)W_{23}+\theta W_{24}.
$
{{The parameter-correspondence ansatz gives the reference parameter $\tilde\rho(\theta)=23+\theta$.} We first use this pairing for the fixed-point comparison, and then compare predicted and reference distributions while varying $\theta$ and $\rho$ independently {({Fig.~\ref{fig:lorenz_results}(d)})}.}

\begin{figure}[!t]
 \centering
 \includegraphics[width=\textwidth]{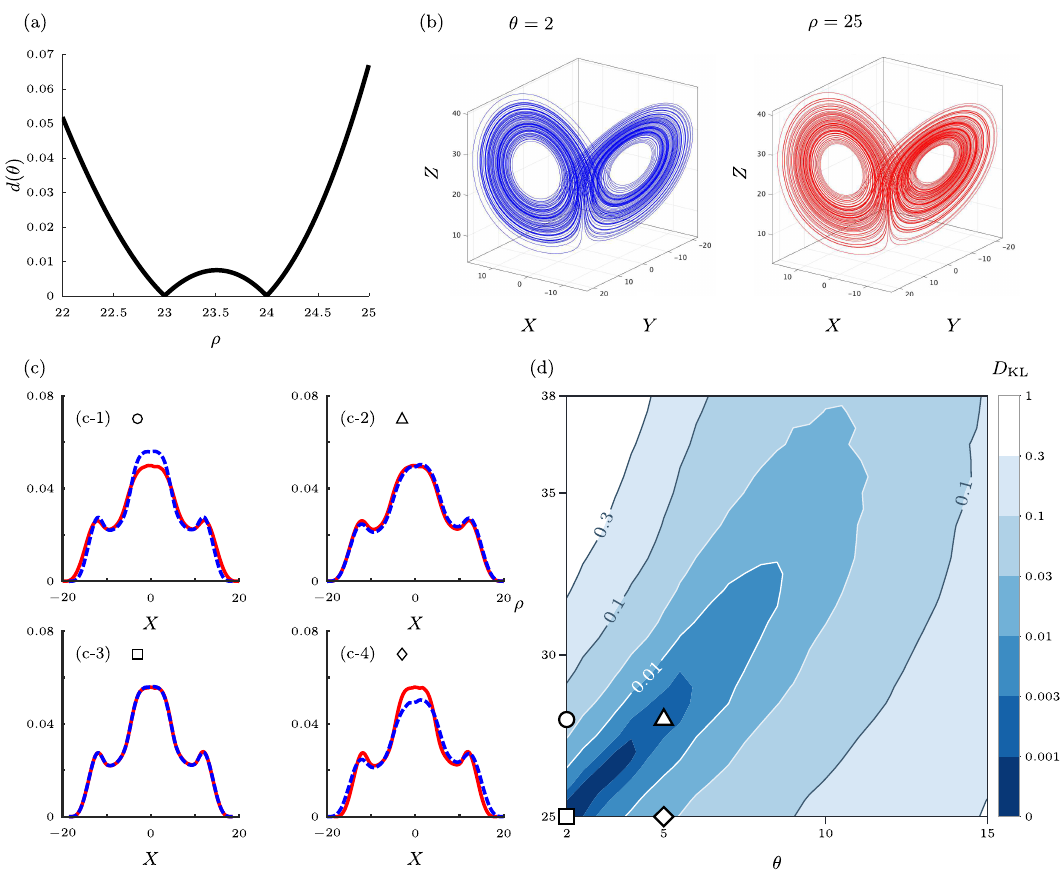}
 \caption{{Reconstruction of Lorenz dynamics. (a) Error in the reconstructed fixed-point position, $d(\theta)$ [Eq.~\eqref{eq:fixedpoint_distance}], plotted against $\tilde\rho(\theta)=23+\theta$ on the $\rho$ axis. The black curve does not encode stability. Interpolation corresponds to $23\leq\rho\leq24$; the remaining range is extrapolation. (b) Predicted trajectory at $\theta=2$ (left, blue) and reference trajectory obtained by numerically integrating the Lorenz equations at $\rho=25$ (right, red). (c) Estimated marginal probability densities of $X$ from predicted (blue dashed) and reference (red solid) trajectories: (c-1) $(\theta,\rho)=(2,28)$, (c-2) $(5,28)$, (c-3) $(2,25)$, and (c-4) $(5,25)$, in reading order. (d) Kullback--Leibler divergence $D_{\mathrm{KL}}(P_\theta\|Q_\rho)$ between the predicted and reference marginal distributions of $X$, versus $\theta$ and $\rho$. The circle, triangle, square, and diamond identify (c-1)--(c-4), respectively. Dark blue denotes smaller divergence and white larger divergence.}}
  \label{fig:lorenz_results}
\end{figure}

\subsection{Fixed-point positions}
{{Figure~\ref{fig:lorenz_results}(a)} plots the error $d(\theta)$ in the reconstructed fixed-point position against $\rho$, with the reconstructed results labeled by $\tilde\rho(\theta)=23+\theta$. The error is the Euclidean distance between the reconstructed position $\widehat x_\theta^*$ and the positive-$X$ Lorenz equilibrium $x_+^*(\rho)=\bigl(\sqrt{\tfrac83(\rho-1)},\,\sqrt{\tfrac83(\rho-1)},\,\rho-1\bigr)$:}

\begin{equation}
 d(\theta)=\bigl\|\widehat x_\theta^*-{x_+^*(\tilde\rho(\theta))}\bigr\|_{\mathcal X}.
\label{eq:fixedpoint_distance}
\end{equation}
{Here $\widehat x_\theta^*:=W_\theta\widehat r_\theta^*$. {As in Fig.~\ref{fig:fixed_points}(a,b), all reconstructed fixed points} are computed using Newton's method in reservoir space, and mapped to the target state space through the combined readout. The range $-1\leq\theta\leq2$ corresponds to $22\leq\tilde\rho(\theta)\leq25$.}

{Corollary~\ref{cor:residual}, together with the additional local fixed-point conditions of Supplemental Appendix C, uniformly valid on $[0,1]$, gives}
\begin{equation}
\begin{aligned}
 d(\theta)&\leq C_*\left[(1-\theta)\varepsilon_a+\theta\varepsilon_b
             +\frac{C}{2}h^2\theta(1-\theta)\right]{,}\\
 &\hspace{12em}{0\leq\theta\leq1.}
\end{aligned}
\label{eq:uniform_position_main}
\end{equation}
{Here $C_*:=1+\kappa L_x\max\{\norm{W_{\mu_a}}_{\mathrm{op}},\norm{W_{\mu_b}}_{\mathrm{op}}\}${, where $\kappa$ uniformly bounds the distance to $\widehat r_\theta^*$ by the fixed-point residual: $\norm{r-\widehat r_\theta^*}_{\mathcal R}\leq\kappa\norm{r-\widehat G_\theta(r)}_{\mathcal R}$ in a neighborhood containing the corresponding ideal fixed point $s_{\tilde\rho(\theta)}(x_+^*(\tilde\rho(\theta)))$.} Linear asymptotic stability gives a sufficient local condition; the proof is in Supplemental Appendix C.}

{The error in {Fig.~\ref{fig:lorenz_results}(a)} is small at the two training parameter values $\rho=23,24$ and has an approximately parabolic profile between them, consistent with the bound above. For $\tilde\rho(\theta)<23$ or $\tilde\rho(\theta)>24$, the error increases away from the training parameter values.}

\subsection{{Extrapolation of chaotic dynamics}}
{{Figure~\ref{fig:lorenz_results}(b)} compares a predicted trajectory at $\theta=2$ with a reference trajectory at $\rho=25$ in the $(X,Y,Z)$ phase space. The prediction is obtained by iterating $\widehat G_2$ and applying $W_2$, and the reference by numerically integrating the Lorenz equations. The parameter-correspondence ansatz gives $\tilde\rho(2)=25$, beyond both training parameter values. Both trajectories exhibit similar butterfly-shaped geometry.}

{{Figure~\ref{fig:lorenz_results}(c)} compares estimates of the marginal probability density of $X$ for four pairs $(\theta,\rho)$. The densities are estimated from sampled values of $X$ along the predicted and reference trajectories. For the prediction, $X$ is the first component of $W_\theta r_n$. Panels (c-2) and (c-3), $(\theta,\rho)=(5,28)$ and $(2,25)$, satisfy $\rho=\tilde\rho(\theta)$ and show close agreement. Panels (c-1) and (c-4), $(2,28)$ and $(5,25)$, exchange the reference parameters and reveal visible discrepancies, particularly near the central peak. The four panels therefore distinguish comparisons based on the ansatz from comparisons at the exchanged reference parameters.}

{{Figure~\ref{fig:lorenz_results}(d)} shows the Kullback--Leibler (KL) divergence $D_{\mathrm{KL}}(P_\theta\|Q_\rho)$ over independently varied mixing coefficients $\theta$ (horizontal axis) and reference parameters $\rho$ (vertical axis). Here $P_\theta$ is the marginal distribution of $X$ estimated from predicted trajectories at mixing coefficient $\theta$, and $Q_\rho$ is the corresponding distribution estimated from reference trajectories at parameter $\rho$. For discretized distributions, $D_{\mathrm{KL}}(P_\theta\|Q_\rho)=\sum_j P_\theta(j)\log[P_\theta(j)/Q_\rho(j)]$, where $j$ labels common bins, the probabilities in each distribution sum to one, and $\log$ is the natural logarithm. The circle, triangle, square, and diamond correspond to panels (c-1), (c-2), (c-3), and (c-4), respectively; the dark-blue band identifies pairs with closer marginal distributions. For integer $\rho=25,26,27,28$, minima over the sampled $\theta$ values occur at $\theta=\rho-23$, with divergences $6.05\times10^{-5}$, $4.56\times10^{-4}$, $5.18\times10^{-4}$, and $1.54\times10^{-3}$, respectively. Deviations from the parameter-correspondence ansatz occur farther into extrapolation; the ansatz does not guarantee exact dynamical agreement. Supplemental Fig.~S1 shows all 24 comparisons of estimated marginal probability densities for $\theta=2,3,\ldots,7$ and $\rho=25,26,27,28$. The evaluated parameter grid and a comparison with an implementation using an additional control input are described in Supplemental Appendix B.}

{As a reference value, we use the KL divergence between marginal distributions of $X$ estimated from two reference trajectories obtained by numerically integrating the Lorenz equations at the same $\rho$ from different initial conditions. This value serves as a practical lower bound for interpreting agreement between distributions estimated from finite samples. {The measured reference value is $7.09\times10^{-5}$.}}

\section{Conclusion and discussion}
\label{sec:discussion}
{We have developed a GS-based framework for extrapolating learned dynamics across bifurcations by combining pretrained RC readouts, with rigorous bounds on readout errors and conjugacy residuals. The method reconstructs both the stable and untrained unstable saddle-node branches [Fig.~\ref{fig:fixed_points}(a)] and the pitchfork's loss of stability and emergence of two stable branches beyond the training parameter values [Fig.~\ref{fig:fixed_points}(b)]. Lorenz fixed-point errors are consistent with the conditional position bound [{Fig.~\ref{fig:lorenz_results}(a)}], while {predicted and reference trajectories have similar geometry, and their estimated marginal probability densities of $X$ are close for the selected parameter pairs} [{Fig.~\ref{fig:lorenz_results}(b,c)}]. The KL comparison tests the parameter-correspondence ansatz and reveals deviations farther into extrapolation [{Fig.~\ref{fig:lorenz_results}(d)}].}

\subsection{Extrapolation capability and the identity theorem}
Here, we discuss a possible link between this extrapolation capability and the identity theorem for analytic functions
under an idealized setting that differs substantially from our numerical experiments.
Suppose that the readout error \(\widehat\omega_\theta-\omega_{\tilde\mu(\theta)}\) depends real-analytically on \(\theta\) over a connected open interval and vanishes identically on \(V\) for every \(\theta\) in a nonempty open subinterval.
The identity theorem then implies that the error vanishes throughout the entire parameter interval, yielding conjugacy between the target and reconstructed dynamics at each corresponding parameter value (Supplemental Appendix F). The relevance of this idealized argument to extrapolation from finite training data remains conjectural.

{Even when a learned model reconstructs the chaotic dynamics and long-time statistics accurately, {{a predicted trajectory will eventually separate} from the corresponding trajectory of the target system} because of the intrinsic sensitivity to initial conditions. {Maintaining accurate predictions of individual trajectories over extended periods} may therefore benefit from combining the learned model with data assimilation designed around the instability structure of the nonlinear dynamics~\cite{Inubushi2023}.}

\section*{Supplementary material}
{See Supplemental Material~\cite{SupplementalMaterial} for numerical settings, KL-divergence details, the fixed-point bound, scalar examples, proofs, the conditional analytic-continuation {argument}{, and the full set of comparisons of marginal probability densities.}}

\clearpage
\section*{Appendix}
\setcounter{figure}{2}
\begin{figure}[!htbp]
 \centering
 \includegraphics[width=\textwidth]{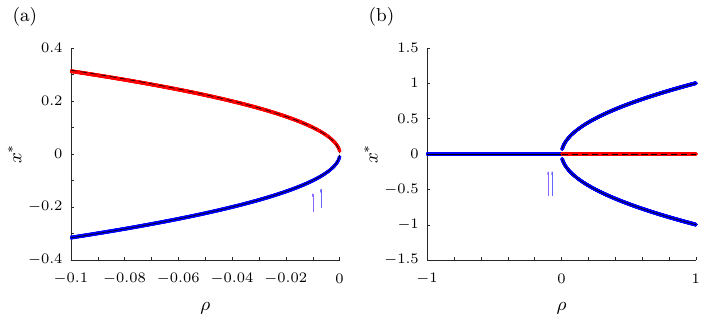}
 \caption{{Reconstruction of fixed points of scalar normal forms. (a) Saddle-node, with readouts trained at $\rho_a=-0.01$ and $\rho_b=-0.007$. (b) Supercritical pitchfork, with readouts trained at $\rho_a=-0.1$ and $\rho_b=-0.07$. Solid/dashed black curves denote stable/unstable fixed-point branches of the target equations; blue/red curves show the readout values of stable/unstable fixed points of the autonomous map $\widehat G_\theta$. Arrows mark the two parameter values at which training data are available. The target parameter $\rho$ and the reconstructed results labeled by $\tilde\rho(\theta)=\rho_a+\theta(\rho_b-\rho_a)$ share the same horizontal axis.}}
 \label{fig:fixed_points}
\end{figure}

\subsection{{Scalar normal forms}}
\begingroup
{For the saddle-node normal form $\dot x=\rho+x^2$, Fig.~\ref{fig:fixed_points}(a) compares the positions and linear stability of the reference and reconstructed fixed points as functions of $\rho$. Readouts are trained only at the two parameter values $\rho_a=-0.01$ and $\rho_b=-0.007$ using transients converging to the stable branch $x_-^*=-\sqrt{-\rho}$. We compute fixed points $\widehat r_\theta^*$ of the autonomous map $\widehat G_\theta$ using Newton's method in reservoir space and plot their readout values $W_\theta\widehat r_\theta^*$. Their linear stability is determined from the eigenvalues of $D_r\widehat G_\theta(\widehat r_\theta^*)$, the Jacobian of the autonomous map evaluated at each fixed point. The reconstructed branches are plotted at $\tilde\rho(\theta)=\rho_a+\theta(\rho_b-\rho_a)$ on the $\rho$ axis, following the parameter-correspondence ansatz. Although the training trajectories approach only the attracting branch, the autonomous map with the combined readout also reconstructs the unstable branch $x_+^*=\sqrt{-\rho}$ and tracks both branches toward the saddle-node bifurcation at $\rho=0$, beyond both training parameter values.}

{For the supercritical pitchfork normal form $\dot x=\rho x-x^3$, Fig.~\ref{fig:fixed_points}(b) compares the corresponding fixed-point positions and stability. Readouts are trained only at the two parameter values $\rho_a=-0.1$ and $\rho_b=-0.07$ using transients converging to the stable origin. {We use the same calculation and parameter-correspondence ansatz as in the saddle-node example.} Extrapolation reproduces the loss of stability of $x^*=0$ and the emergence of the two stable branches $x^*=\pm\sqrt{\rho}$ for $\rho>0$. These examples therefore demonstrate reconstruction of fixed-point branches and stability changes absent from the training data, rather than fixed-point locations alone. Detailed numerical settings and continuation procedures are given in Supplemental Appendix D.}
\endgroup

\subsection{{Relation to control-input learning by Kim {\it et al.} (2021)}}
\label{sec:kim_appendix}
{Kim \emph{et al.}~\cite{Kim2021} fit a single readout to concatenated reservoir states and target trajectories generated with an additional control input. Our method differs in three respects. First, independently pretrained readouts for the same fixed reservoir can be reused without retraining. 
Second, we vary the predicted dynamics by adjusting only the readout weights, without introducing an additional control input to the reservoir.
This is suitable for implementations of RC using physical systems~\cite{NakajimaFischer2021}, because incorporating such an input into their dynamics can be difficult. Our method allows the physical reservoir itself to remain unchanged, provided that the readout weights are adjustable and the output can be fed back as input.
Third, at mixing coefficients $\theta=0$ and $\theta=1$, the identities $W_0=W_{\mu_a}$ and $W_1=W_{\mu_b}$ recover the readouts optimized independently at $\mu_a$ and $\mu_b$, respectively. Each therefore retains the minimum least-squares error for the reservoir states and target outputs at its training parameter value [Eq.~\eqref{eq:least_squares}]. In contrast, a readout fitted jointly to all parameter values minimizes the sum of their regression losses
and need not minimize the loss for each parameter value separately~\cite{Kim2021}.}

\end{document}